\documentclass[a4paper, 11pt]{article}

\usepackage{epsfig}
\usepackage{amsmath}
\usepackage{amssymb}
\usepackage{latexsym}
\usepackage{amsfonts}
\usepackage{amsthm}
\usepackage{color}
\usepackage[numbers,sort&compress]{natbib}
\usepackage{graphicx}
\ifx\pdfoutput\undefined \DeclareGraphicsRule{*}{eps}{*}{} \else
\DeclareGraphicsRule{*}{mps}{*}{} \fi
\usepackage[centerlast]{caption2}

\linespread{1.5}

\setlength{\parskip}{7pt} \setlength{\topmargin}{0mm}
\setlength{\headheight}{0mm} \setlength{\headsep}{0mm}
\setlength{\textheight}{240mm} \setlength{\oddsidemargin}{3mm}
\setlength{\textwidth}{150mm}

\newtheorem{lemma}{Lemma}
\newtheorem{observation}{Observation}

\newtheorem{theorem}{Theorem}
\numberwithin{figure}{section} \numberwithin{definition}{section}
\numberwithin{observation}{section} \numberwithin{lemma}{section}
\numberwithin{theorem}{section} \numberwithin{corollary}{section}

\begin{document}

\title{{\bf The crossing number of the generalized Petersen graph $\bf P(10,3)$ is six} \footnote{The
research is supported by NSFC (60973014, 61170303)}}
\author{
\ Yuansheng Yang\footnote {corresponding author's
email : yangys@dlut.edu.cn}, \ Baigong Zheng, \ Xirong Xu \\
School of Computer Science and Technology, \\
Dalian University of Technology, Dalian, 116024, P. R. China }
\date{}
\maketitle
\begin{abstract}
The crossing number of a graph is the least number of crossings of
edges among all drawings of the graph in the plane. In this article,
we prove that the crossing number of the generalized Petersen graph
$P(10,3)$ is equal to 6.
\end{abstract}
\section{Introduction}

Let $n\geq3$ and $k\in \mathbb{Z}_n\backslash \{0\}$. The
$generalized\mbox{ } Petersen\mbox{ } graph\mbox{ } P(n,k)$ is
defined on the set of vertices $\{x_i,y_i|i\in \mathbb{Z}_n\}$ with
edges $x_ix_{i+1}$, $x_iy_i$ and $y_iy_{i+k}$ \cite{S97}. The {\it
crossing number} of $G$, denoted by $cr(G)$, is the smallest number
of pairwise crossings of edges among all drawings of $G$ in the
plane. Fiorini showed the values of $cr(P(n,k))$ for $n$ up to 14 in
\cite{F86} and the values are extended for $n$ up to 16 in
\cite{LY09}. And in \cite{F86} the smallest unresolved cases result
to be the graphs $P(10,4)$ and $P(10,3)$. For the first one, it was
resolved by Sara\u{z}in in \cite{S97}. For the second one, McQuillan
and Richter in \cite{MR92} showed that $cr(P(10,3))\geq5$ and with
the result verified by computer that $cr(P(10,3))=6$, Richter and
Salazar proved $cr(3k+1,3)=k+3$ in \cite{RS02}. Fiorini and Gauci
\cite{FG03} extended their result by showing that $P[3k,k]$ also has
crossing number $k$ for all $k\geq4$.

The main result in this article is the following theorem
\begin{theorem}\label{Theorem: cr(P(103))= 6}
$cr(P(10,3))=6$.
\end{theorem}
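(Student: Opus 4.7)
The plan is to split the statement into the upper bound $cr(P(10,3)) \le 6$ and the lower bound $cr(P(10,3)) \ge 6$. Since the introduction cites the McQuillan--Richter result $cr(P(10,3)) \ge 5$, the lower bound reduces to ruling out a drawing with exactly $5$ crossings.

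For the upper bound, I would exhibit a concrete good drawing realizing $6$ crossings. Because $\gcd(10,3) = 1$, the inner edges $y_i y_{i+3}$ form a single $10$-cycle $C_2 = y_0 y_3 y_6 y_9 y_2 y_5 y_8 y_1 y_4 y_7 y_0$, so $P(10,3)$ is the edge-disjoint union of the outer $10$-cycle $C_1 = x_0 x_1 \cdots x_9 x_0$, the inner $10$-cycle $C_2$, and the perfect matching $M = \{x_i y_i : i \in \mathbb{Z}_{10}\}$. Drawing $C_1$ as a convex polygon, placing the $y_i$ inside $C_1$, and routing the chords of $C_2$ in a judicious order yields a drawing one can verify by direct inspection to have exactly $6$ crossings.

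The substance of the proof is the lower bound. Suppose for contradiction that $D$ is a good drawing of $P(10,3)$ with exactly $5$ crossings. I would first classify the crossings by the three edge classes $E(C_1), E(C_2), M$, giving a short list of possible crossing-type distributions. For each distribution, I would remove a small set of edges incident to the crossings to obtain a planar subdrawing $D'$, and analyze the planar embedding of $D'$: since $D'$ contains most of $C_1 \cup M$, its faces are severely restricted, and the cyclic order in which the inner vertices $y_i$ appear on each face bounded by (what remains of) $C_1$ is forced up to symmetry. One then checks that the deleted $C_2$-edges cannot be added back into $D'$ without producing at least as many additional crossings as were deleted, contradicting the assumption of only $5$ crossings in $D$.

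The main obstacle will be the cases where the $5$ crossings are spread over many distinct edge pairs, so that no single large subgraph of $D$ immediately forces the contradiction. Here I would invoke the dihedral symmetry of $P(10,3)$ generated by $i \mapsto i+1$ and the natural reflection to cut down the number of cases, combined with a local analysis around each spoke $x_i y_i$ that constrains the rotation systems at $x_i$ and $y_i$. The decisive and most delicate step will be to show that once $C_1 \cup M$ is drawn in the plane with only a handful of crossings, none of the resulting cyclic orderings of the $y_i$ around the faces of $C_1$ admits an extension by the ten $C_2$-edges with a total of fewer than $6$ crossings.
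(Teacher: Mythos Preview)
Your plan is a genuinely different route from the paper's, and at the level of a sketch it is plausible; but the comparison is instructive. You decompose $P(10,3)$ as $C_1 \cup C_2 \cup M$ (outer cycle, inner cycle, spokes), invoke McQuillan--Richter for $cr \ge 5$, and then propose to kill the case $\nu(D)=5$ by deleting crossing-incident edges, classifying planar embeddings of the large remainder $D'$, and checking that the deleted $C_2$-edges cannot be reinserted cheaply. The paper does something quite different: it never uses the bound $cr \ge 5$, and it does \emph{not} use the $C_1/C_2/M$ split. Instead it partitions the vertex set into two isomorphic halves $V_1,V_2$ so that each induced subgraph $G_i$ is an $8$-cycle with two pendant $2$-paths (four degree-$3$ and six degree-$2$ vertices), with a set $E_{12}$ of six edges between them. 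It then builds an auxiliary ``dual'' graph $H$ whose vertices are the regions of the subdrawing $D_1$ of $G_1$ that meet $G_2$, proves that $\nu_D(E_1,E_2)$ is bounded below in terms of $|E_H|$ and the leaves of $H$, and runs a case analysis over the finitely many possible $H$ (at most $P_2$, $P_3$, $K_3$, $C_4$, $K_{1,3}+e$, $K_4-e$ survive). What this buys is a uniform counting mechanism: each case yields $\nu_D(E_1,E_2)+\nu_D(E_i)+\nu_D(E_i,E_{12})\ge 6$ from two or three short lemmas, with essentially no rotation-system bookkeeping. Your approach, by contrast, front-loads all the difficulty into the last sentence of your plan: enumerating the cyclic orders of the $y_i$ in the faces of a lightly-crossed $C_1\cup M$ and bounding the reinsertion cost of the ten $C_2$-edges is exactly the hard combinatorics, and nothing in the plan indicates how many cases this produces or why they close. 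That step is not a detail---it is the proof---so as written the proposal is a reasonable strategy statement but not yet an argument.
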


The rest of this paper is organized as follows. In Section 2 we
introduce some technical notations and tools, while in Section 3 we
prove Theorem \ref{Theorem: cr(P(103))= 6}.
\section{Preliminaries}

\indent \indent Let $G$ be a simple connected graph with vertex set
$V(G)$ and edge set $E(G)$. For $S\subseteq V(G)$, let $[S]$ be the
subgraph of $G$ induced by $S$. Let $P_{v_1\cdots v_n}$ be the
$path$ with $n$ vertices $v_1,\cdots ,v_n$ and let $C_{v_1\cdots
v_n}$ be the $cycle$ with $n$ vertices $v_1,\cdots ,v_n$. We denote
by $v_iv_j$ the edge with ends in the vertices $v_i$ and $v_j$. The
neighborhood of a vertex $v$ is denoted by $N(v)$, formed by all
vertices adjacent to $v$ in graph. And the neighborhood of $S$,
denoted by $N(S)$, is the union of $N(v)$ for all $v\in S$.
Additionally, the degree of a vertex $v$ in a graph $G$ is denoted
by $d_G(v)$.

A drawing of $G$ is said to be a $good$ drawing, provided that no
edge crosses itself, no adjacent edges cross each other, no two
edges cross more than once, and no three edges cross in a point. It
is well known that the crossing number of a graph is attained only
in good drawings of the graph. So we always assume that all drawings
throughout this paper are good drawings. For a good drawing $D$ of a
graph $G$, let $\nu(D)$ be the number of crossings in $D$. In a
drawing $D$, if an edge is not crossed by any other edge, we say it
is $clean$ in $D$.

Let $A$ and $B$ be two disjoint subsets of an edge set $E$. In a
drawing $D$, the number of the crossings formed by an edge in $A$
and another edge in $B$ is denoted by $\nu_D(A,B)$. The number of
the crossings that involve a pair of edges in $A$ is denoted by
$\nu_D(A)$. Then $\nu_D(A\cup B)=\nu_D(A)+\nu_D(B)+\nu_D(A,B)$ and
$\nu(D)=\nu_D(E)$.

In this paper, we will use the term $``face"$ in planar drawings,
and $``region"$ in nonplanar drawings. By a $line\mbox{ } segment$,
we mean a curve incident with vertices or crossings. The $bound$ of
a region $R$ is the boundary of the open set $R$ in the usual
topological sense. Two drawings of $G$ are $isomorphic$ if and only
if there is an incidence preserving one-to-one correspondence
between their vertices, edges, parts of edges and regions. We denote
$G\cong H$ if graph $G$ is isomorphic to graph $H$ and $G\not\cong
H$ if not.

\begin{figure}[ht]
\centering\includegraphics[scale=1.0]{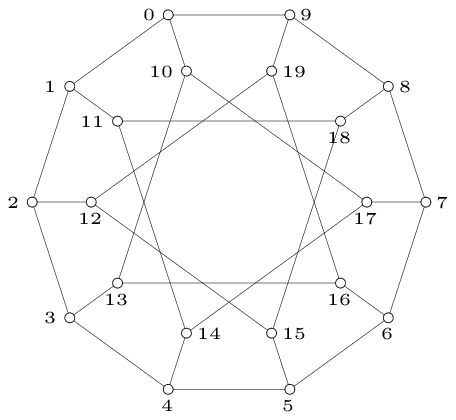}\hspace{20bp}
\centering\includegraphics[scale=1.0]{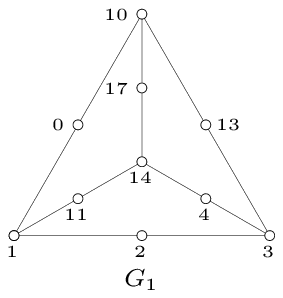}\hspace{20bp}
\centering\includegraphics[scale=1.0]{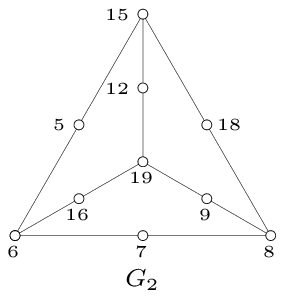}
\caption{\small{$P(10,3)$} and its two subgraphs}\label{fig:(P103)}
\end{figure}

Now we will introduce the graph $P(10,3)$. Let
$$\begin{array}{rlll}
V_{1}&=\{v_{0},v_{1},v_{2},v_{3},v_{4},v_{10},v_{11},v_{17},v_{13},v_{14}\},\\
V_{2}&=\{v_{5},v_{6},v_{7},v_{8},v_{9},v_{15},v_{16},v_{12},v_{18},v_{19}\}.
\end{array}$$
For $i=1,2$, let $G_{i}=[V_{i}]$, $E_i=E(G_{i})$. Let
$E_{12}=\{uv:u\in V_1\wedge v\in V_2\}$ (see Figure
\ref{fig:(P103)}).

For $i=1,2$, let
$$\begin{array}{rlll}
V^{2}_{i}&=\{v:v\in V_{i}\wedge d_{G_i}(v)=2\},\\
V^{3}_{i}&=\{v:v\in V_{i}\wedge d_{G_i}(v)=3\}.\\
\end{array}$$
Then $|V^{2}_{i}|=6$, $|V^{3}_{i}|=4$ for $i=1,2$.

To prove Theorem \ref{Theorem: cr(P(103))= 6}, we need to analyze
certain drawings of these two subgraphs. Therefore, the following
observation and lemmas related to their crossings in different
situations will be useful for our later proofs.

\begin{observation}\label{observation: boundary to 3-degree}
For $i=1,2$, and any $v\in V^{3}_i$, there exists just one $u\in
V^{3}_{3-i}$ such that $N(N(v))\cap V_{3-i}=V^2_{3-i}-N(u)$.
\end{observation}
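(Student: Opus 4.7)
The plan is to prove the observation by an exhaustive check over the four vertices of $V^3_i$, using the symmetry of $P(10,3)$ to halve the work. The map $\sigma$ defined on the outer cycle by $\sigma(v_j)=v_{j+5\;(\bmod\,10)}$ and on the inner cycle by $\sigma(v_{10+j})=v_{10+((j+5)\bmod 10)}$ is an automorphism of $P(10,3)$ that carries $V_1$ onto $V_2$; hence it interchanges $V^3_1\leftrightarrow V^3_2$ and $V^2_1\leftrightarrow V^2_2$. It is therefore enough to prove the statement for $i=1$, since applying $\sigma$ transports the conclusion to $i=2$.

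The first step is bookkeeping: from the edge rules $x_ix_{i+1}$, $x_iy_i$, $y_iy_{i+3}$ a direct degree count in $G_1$ and $G_2$ yields $V^3_1=\{v_1,v_3,v_{10},v_{14}\}$, $V^3_2=\{v_6,v_8,v_{15},v_{19}\}$, together with the six-element complements $V^2_1$ and $V^2_2$. The proposed pairing is $v\leftrightarrow \sigma(v)$, giving the four candidate pairs $(v_1,v_6)$, $(v_3,v_8)$, $(v_{10},v_{15})$, $(v_{14},v_{19})$.

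The second step is the verification proper. For each $v\in V^3_1$ I write down $N(v)$ in the full graph $P(10,3)$, then for each $w\in N(v)$ compute $N(w)\cap V_2$ and take the union, obtaining a three-element subset $N(N(v))\cap V_2$ of $V^2_2$. Its complement in $V^2_2$ is then compared to $N(\sigma(v))\cap V^2_2$, and the two sets agree. Uniqueness of $u$ requires no further work: a glance at the four sets $\{N(u)\cap V^2_2:u\in V^3_2\}$ shows they are pairwise distinct three-element subsets of $V^2_2$, so only one $u$ can realise the complement.

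No conceptual difficulty is expected; the one point on which a careless computation could go wrong is the inner-cycle adjacency rule $v_{10+j}\sim v_{10+((j\pm 3)\bmod 10)}$, whose consistent application propagates through every case. For clarity I would therefore present the whole observation as a short table whose rows record, for each $v\in V^3_1$, the sets $N(v)$, $N(N(v))\cap V_2$, the complement $V^2_2-(N(N(v))\cap V_2)$, and the witnessing vertex $u=\sigma(v)\in V^3_2$, followed by a one-line remark that the four rows for $i=2$ are obtained by applying $\sigma$.
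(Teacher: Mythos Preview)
Your proposal is correct. The paper states this result as an observation without proof, implicitly leaving the direct verification to the reader; your exhaustive check (sharpened by the rotation $\sigma$ that swaps $V_1$ and $V_2$) is exactly the kind of argument the paper takes for granted, and your computations of $V^3_i$, $V^2_i$, and the pairing $v\leftrightarrow\sigma(v)$ are all accurate.
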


By this observation, we could have the following straightforward
lemma.

\begin{lemma}\label{lemma: boundary to 3-degree cr>=1}
For $i=1,2$, and any $v\in V^{3}_{i}$, if $\nu_D(E_{3-i})=0$ and all
vertices of $V^2_{i}-N(v)$ lie in the same region of $G_{3-i}$, then
$\nu_D(E_{3-i},E_{12})\geq1$.
\end{lemma}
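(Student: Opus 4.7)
The plan is to argue by contradiction: suppose $\nu_D(E_{3-i},E_{12})=0$ in addition to the stated hypotheses, and derive an impossibility.

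The first step is to pin down where the three cross edges emanating from $V^2_i-N(v)$ must land. Let $u\in V^3_{3-i}$ be the unique vertex supplied by Observation~\ref{observation: boundary to 3-degree}. Every vertex of $V^2_j$ ($j=1,2$) has degree two in $G_j$ and degree three in $P(10,3)$, so carries exactly one edge of $E_{12}$; in particular the three cross edges from $N(v)\subseteq V^2_i$ terminate precisely at $N(N(v))\cap V_{3-i}=V^2_{3-i}-N(u)$. Because the six vertices of $V^2_{3-i}$ absorb all six cross edges, the remaining three cross edges, which emanate from $V^2_i-N(v)$, must land bijectively on $N(u)\subseteq V^2_{3-i}$.

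Under $\nu_D(E_{3-i})=0$ the drawing of $G_{3-i}$ is planar; let $R$ be the common region containing all of $V^2_i-N(v)$. Since no cross edge is assumed to cross an edge of $G_{3-i}$, each of the three cross edges just identified is drawn entirely inside $R$, and hence all three vertices of $N(u)$ must lie on the boundary of $R$.

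The remaining and main step is to check that this is impossible, which is the part I expect to require the most care. A direct verification on the explicit description of $G_1$ and $G_2$ shows that, for every $u\in V^3_{3-i}$, the three edges incident to $u$ in $G_{3-i}$ lead to three distinct degree-two neighbors, and none of them is a bridge. Consequently the three sectors at $u$ lie in three distinct faces $F_1,F_2,F_3$ of the planar drawing, and each $F_k$ has on its boundary exactly the two vertices of $N(u)$ that are joined to $u$ by the edges bounding its sector. Each vertex of $N(u)$ has degree two in $G_{3-i}$, so lies on the boundary of exactly two faces, both of which must be sector-faces at $u$. Therefore no face of the drawing has all three vertices of $N(u)$ on its boundary, contradicting the previous paragraph and yielding $\nu_D(E_{3-i},E_{12})\geq 1$.
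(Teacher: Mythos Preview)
Your argument is correct. The paper does not actually supply a proof of this lemma: it simply records that the statement is a ``straightforward'' consequence of Observation~\ref{observation: boundary to 3-degree}. Your write-up is a sound elaboration of precisely that route. You use the observation to locate $u\in V^3_{3-i}$, note that $E_{12}$ is a perfect matching between $V^2_1$ and $V^2_2$ so that the three cross edges leaving $V^2_i-N(v)$ must terminate exactly in $N(u)$, and conclude under $\nu_D(E_{3-i},E_{12})=0$ that all of $N(u)$ would have to sit on the boundary of a single face of the planar drawing of $G_{3-i}$. Your final step---showing this is impossible because the three (non-bridge) edges at $u$ separate three distinct faces, and each degree-two vertex of $N(u)$ lies on exactly the two sector-faces flanking its edge to $u$---is a clean local argument. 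One could alternatively observe that $G_{3-i}$ is a subdivision of $K_4$ (the four vertices of $V^3_{3-i}$ are the branch vertices), so its planar embedding is essentially unique and each face meets exactly two of the three subdivision vertices in $N(u)$; this gives the same conclusion by direct inspection and is likely what the authors had in mind by ``straightforward''.
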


For any region $R$ of $G_{3-i}$, we define $V_{in}(R;G_{i})=\{v:
v\in V^{2}_{i}$ and $v \mbox{ lies in }R \}$ and
$V_{out}(R;G_{i})=\{v: v\in V^{2}_{i}$ and $v \mbox{ lies in the
outside of }R\}$. Then $|V_{in}(R;G_{i})|\leq 6$,
$|V_{out}(R;G_{i})|\leq 6$, and we can have the following lemmas.

\begin{lemma}\label{lemma: two 2-degree >=3}
If $|V_{in}(R;G_i)|\geq2$ and $|V_{out}(R;G_i)|\geq2$, then the
edges of $E_i$ cross the bound of $R$ at least three times,
$\nu_D(E_1,E_2)\geq3$.
\end{lemma}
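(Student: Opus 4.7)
The plan is to reduce the lemma to a purely combinatorial ``essential $3$-edge-connectivity'' property of $G_i$ and then to convert cut edges into boundary crossings by a standard Jordan-curve parity argument. The combinatorial claim I want to prove is: for any partition $V_i=A\sqcup B$ with $|A|,|B|\ge 2$, one has $|E_i(A,B)|\ge 3$.

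To prove this claim I would first identify the structure of $G_i$. Inspection of the explicit vertex sets shows that every edge of $E_i$ joins a vertex of $V^3_i$ to a vertex of $V^2_i$ (so both $V^2_i$ and $V^3_i$ are independent sets of $G_i$), and that each $w\in V^2_i$ has its two neighbors at two distinct hubs of $V^3_i$. Consequently, suppressing every degree-$2$ vertex (replacing each $w\in V^2_i$ by a single edge between its two neighbors) produces a multigraph $H_i$ on the four hubs of $V^3_i$ with exactly six edges; a direct check of the six hub-pairs arising shows that $H_i\cong K_4$ for both $i=1,2$. Equivalently, each $G_i$ is a copy of $K_4$ with every edge subdivided exactly once.

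Given a partition $V_i=A\sqcup B$ with $|A|,|B|\ge 2$, I then classify each $w\in V^2_i$ by where its two hub-neighbors lie: if the hubs are split across $A,B$, $w$ contributes $1$ edge to $E_i(A,B)$; if both hubs lie on the side opposite to $w$, $w$ contributes $2$; otherwise $w$ contributes $0$. Writing $\gamma$ and $\beta$ for the counts of the first two types, $|E_i(A,B)|=\gamma+2\beta$, and $\gamma$ is exactly the cut size in $H_i\cong K_4$ induced by restricting the partition to $V^3_i$. If both sides of the partition meet $V^3_i$, then $\gamma\ge 3$ by the $3$-edge-connectivity of $K_4$. Otherwise all four hubs lie on one side, say $A$; then $B$ consists solely of degree-$2$ vertices whose two hubs both lie in $A$, giving $\beta=|B|\ge 2$ and cut size at least $4$. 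Either way, $|E_i(A,B)|\ge 3$.

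The final step is the topological translation. Applying the claim with $A=V_i\cap R$ and $B=V_i\setminus R$ (which satisfy $|A|,|B|\ge 2$ by the hypothesis $|V_{in}(R;G_i)|,|V_{out}(R;G_i)|\ge 2$), there are at least $3$ edges of $E_i$ running between $R$ and its complement. By a Jordan-curve parity argument each such edge must meet $\partial R$ an odd, hence $\ge 1$, number of times, and in a good drawing each meeting is a transverse crossing with some edge of $E_{3-i}$ (no three edges meet at a point). Summing, the edges of $E_i$ cross $\partial R$ at least $3$ times, and each such crossing is counted once in $\nu_D(E_1,E_2)$, giving $\nu_D(E_1,E_2)\ge 3$. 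The only step that is not automatic, and the only place where the specific structure of $P(10,3)$ enters, is the verification that $H_i\cong K_4$; this is a small finite check but it is the crux of the argument, since without the $3$-edge-connectivity of $H_i$ the combinatorial cut bound would fail.
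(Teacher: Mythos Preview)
Your proof is correct and takes a genuinely different route from the paper. The paper's argument is a brief picture-based enumeration: assuming at most two crossings with $\partial R$, Figure~2.2 lists all ways the boundary can slice $G_i$ and observes that the smaller side then contains at most one vertex of $V^2_i$. You instead isolate the structural reason behind the bound: each $G_i$ is a subdivision of $K_4$, so it is essentially $3$-edge-connected, and therefore any bipartition $V_i=A\sqcup B$ with $|A|,|B|\ge 2$ has cut size at least $3$; the Jordan-curve step then converts cut edges into crossings with $\partial R\subseteq E_{3-i}$. What your approach buys is a self-contained, figure-free proof that explains \emph{why} the constant is $3$ (it is the edge-connectivity of $K_4$) and that would transfer immediately to any subgraph that happens to be a subdivision of a $3$-edge-connected graph. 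The paper's approach is shorter once one grants the figure, but it leaves the underlying mechanism implicit. One minor note: your combinatorial claim requires only $|A|,|B|\ge 2$ for arbitrary vertices, so you in fact prove a slight strengthening of the lemma in which the two vertices on each side need not come from $V^2_i$.
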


\begin{proof}
By contradiction. Suppose $\nu_D(E_1,E_2)\leq 2$. Without loss of
generality, we may assume $|V_{in}(R;G_i)|\leq|V_{out}(R;G_i)|$.
Then $|V_{in}(R;G_i)|\leq 1$ (see Figure \ref{fig:cr(E1E2)<=2}), it
contradicts to $|V_{in}(R;G_i)|\geq2$. So Lemma \ref{lemma: two
2-degree
>=3} holds.
\begin{figure}[ht]
\centering\includegraphics[scale=1]{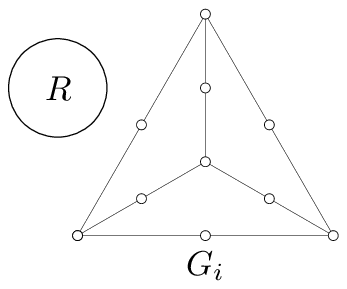}
\centering\includegraphics[scale=1]{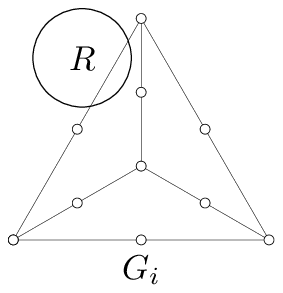}
\centering\includegraphics[scale=1]{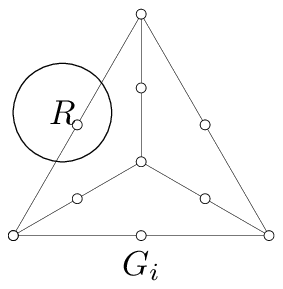}
\caption{\small{All situations for
$\nu_D(E_1,E_2)\leq2$}}\label{fig:cr(E1E2)<=2}
\end{figure}
\end{proof}

\begin{lemma}\label{lemma: cr(G12) >= 3}
For any region $R$ of $G_{i}$, $\nu_D(E_i)+\nu_D(E_i,E_{12})\geq
|V_{in}(R;G_{3-i})|-3$ for $i=1,2$.
\end{lemma}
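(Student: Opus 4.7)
The plan is to reduce the claim to a ``boundary capacity'' bound on $|V^2_i\cap \partial R|$ that follows from the structure of $G_i$.

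\emph{Step 1: Structural observations.} Each vertex of $V^2_{3-i}$ has $G_{3-i}$-degree $2$ and $P(10,3)$-degree $3$, hence is incident to exactly one edge of $E_{12}$; each vertex of $V^3_j$ has all three of its $P(10,3)$-neighbors in $V_j$ and carries no edge of $E_{12}$. Thus $E_{12}$ is a perfect matching between $V^2_1$ and $V^2_2$. Moreover, direct inspection shows each $v\in V^2_i$ has both $G_i$-neighbors in $V^3_i$, so $G_i$ is isomorphic to the subdivision of $K_4$ on vertex set $V^3_i$, with each of the six $K_4$-edges subdivided by exactly one vertex of $V^2_i$.

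\emph{Step 2: Translate to a boundary condition.} For each $v \in V_{in}(R;G_{3-i})$, let $w_v\in V^2_i$ denote its $E_{12}$-partner. If $w_v\notin \partial R$, the curve drawn for the edge $vw_v$ starts inside $R$ and ends outside $\overline{R}$, and therefore must cross $\partial R$ at least once; every such crossing is between an edge of $E_{12}$ and an edge of $E_i$, and contributes to $\nu_D(E_i,E_{12})$. Since $v\mapsto w_v$ is injective,
\[
\nu_D(E_i,E_{12}) \;\geq\; |V_{in}(R;G_{3-i})| - |V^2_i\cap \partial R|.
\]

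\emph{Step 3: Bound the boundary capacity.} The crucial claim is that, for every region $R$ of the drawing of $G_i$,
\[
|V^2_i\cap \partial R| \;\leq\; 3 + \nu_D(E_i). \qquad (\star)
\]
Promote each $E_i$-crossing to a degree-$4$ vertex to obtain a plane multigraph $H$; Euler's formula gives $H$ exactly $4+\nu_D(E_i)$ faces. When $\nu_D(E_i)=0$, the $3$-connectivity of $G_i$ (inherited from $K_4$) and Whitney's theorem force the planar embedding to be unique; its four faces correspond to the four triangular faces of planar $K_4$, each bordered by the three subdivision vertices lying on its three $K_4$-edges. So $|V^2_i\cap \partial R|=3$, and the base case of $(\star)$ holds. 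For $\nu_D(E_i)\geq 1$ we argue by induction on $\nu_D(E_i)$: pick a crossing $X\in E_i\times E_i$; the four faces of $H$ meeting at $X$ can be locally ``uncrossed'' by merging opposite face-pairs, producing a drawing with one fewer $E_i$-crossing. The merging operation alters each individual face's boundary walk by joining it with one other boundary, so the maximum of $|V^2_i\cap \partial R|$ over all faces can drop by at most one. Applying the inductive hypothesis to the uncrossed drawing propagates $(\star)$.

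\emph{Step 4: Conclude.} Combining Steps 2 and 3,
\[
\nu_D(E_i)+\nu_D(E_i,E_{12}) \;\geq\; \nu_D(E_i) + |V_{in}(R;G_{3-i})| - \bigl(3+\nu_D(E_i)\bigr) \;=\; |V_{in}(R;G_{3-i})|-3.
\]

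\emph{Main obstacle.} The inductive step in $(\star)$ is the subtle part: one must verify rigorously that the local uncrossing at $X$ changes the maximum face-capacity by at most $1$, which hinges on how the four faces meeting at $X$ are joined after the uncrossing and how their boundary walks interact with the $V^2_i$-vertices. The Step~2 reduction and the $K_4$-subdivision base case are comparatively clean once one spots the matching structure of $E_{12}$ and the $K_4$-subdivision isomorphism for $G_i$.
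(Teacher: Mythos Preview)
Your reduction in Steps~1, 2 and~4 is correct and is exactly the skeleton the paper uses: both arguments come down to the same boundary-capacity bound
\[
|V^2_i\cap\partial R|\;\le\;3+\nu_D(E_i),
\]
after which the lemma follows by the matching structure of $E_{12}$ just as you wrote.

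The genuine gap is in Step~3, and it is not merely a matter of making the inductive step rigorous; the induction as set up cannot work. Whatever ``uncrossing'' means (the natural reading is the local edge-swap that smooths the degree-$4$ crossing vertex), the resulting plane picture is a drawing of a \emph{different} graph, not of $G_i$. Your inductive hypothesis, however, is stated for drawings of $G_i$, so it does not apply to the smoothed picture. Concretely: start from a drawing of $G_i$ with one crossing between the branches $a\!-\!u\!-\!b$ and $c\!-\!v\!-\!d$ (with $\{a,b,c,d\}=V^3_i$); smoothing swaps these for $a\!-\!u\!-\!c$ and $b\!-\!v\!-\!d$, yielding a subdivision of the cubic multigraph on $\{a,b,c,d\}$ with edge multiset $\{ac,ac,bd,bd,ad,bc\}$. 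That multigraph has a planar embedding whose inner face is a $4$-cycle carrying four subdivision vertices, so the ``base case'' bound $|V^2_i\cap\partial R|\le 3$ fails for it. In other words, after even one smoothing the target of the induction is no longer in the class for which your base case was proved. (A side remark: your monotonicity claim actually gives $\max(\text{before})\le\max(\text{after})$, not just $\le\max(\text{after})+1$; if the induction were valid this would prove the false statement that every face in every drawing of $G_i$ sees at most three vertices of $V^2_i$.)

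The paper sidesteps this by observing that $(\star)$ is vacuous once $\nu_D(E_i)\ge 3$ (since $|V^2_i|=6$), and then disposing of $\nu_D(E_i)\in\{0,1,2\}$ by a short finite enumeration of the possible drawings of the $K_4$-subdivision (Figure~\ref{fig:v(Ei)<3}). Replacing your inductive Step~3 with that enumeration (or any direct argument for $\nu_D(E_i)=1,2$) would complete your proof along the same lines as the paper.
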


\begin{proof}Since $|V_{in}(R;G_{3-i})|\leq6$, the conclusion is
straightforward for $\nu_D(E_i)\geq 3$. In Figure \ref{fig:v(Ei)<3},
we enumerate all the different situations in which there should be
the most vertices of $V^{2}_i$ on the boundary of $R$ for
$\nu_D(E_i)\leq 2$. There are at most $3+\nu_D(E_i)$ vertices of
$V^{2}_i$ on the boundary of $R$. So we have $\nu_D(E_i,E_{12})\geq
|V_{in}(R;G_{3-i})|-(3+\nu_D(E_i))$, that is
$\nu_D(E_i)+\nu_D(E_i,E_{12})\geq |V_{in}(R;G_{3-i})|-3$.

\begin{figure}[ht]
\centering\includegraphics[scale=0.8]{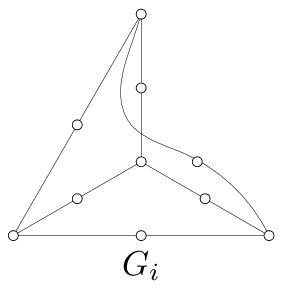}
\centering\includegraphics[scale=0.8]{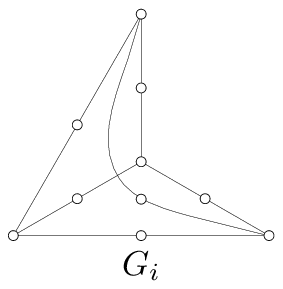}
\centering\includegraphics[scale=0.8]{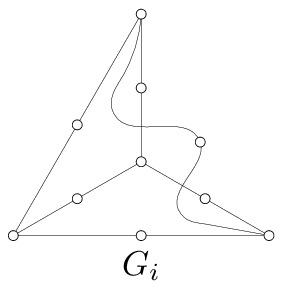}
\centering\includegraphics[scale=0.8]{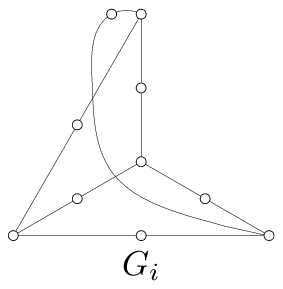}
\centering\includegraphics[scale=0.8]{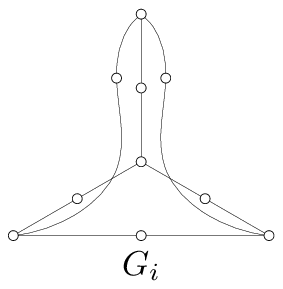}
\centering\includegraphics[scale=0.8]{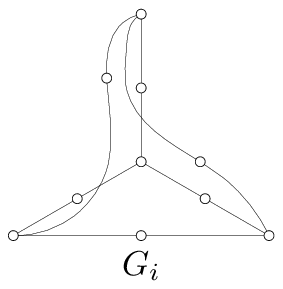}
\caption{\small{All different situations for
$\nu_D(E_i)<3$}}\label{fig:v(Ei)<3}
\end{figure}
\end{proof}

\begin{lemma}\label{lemma:  cr(E1,E2)=0}
If $\nu_D(E_1,E_2)=0$, then
$\nu(D)\geq\nu_D(E_1)+\nu_D(E_1,E_{12})+\nu_D(E_2)+\nu_D(E_2,E_{12})\geq6$.
\end{lemma}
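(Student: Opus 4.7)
The plan is to promote the hypothesis $\nu_D(E_1,E_2)=0$ to the stronger topological statement that the entire subgraph $G_2$ lies in a single region of the subdrawing of $G_1$ (and symmetrically for $G_1$ in the $G_2$-subdrawing), and then apply Lemma~\ref{lemma: cr(G12) >= 3} twice.

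First I would note that both $G_1$ and $G_2$ are connected: in $G_1$, the outer path $v_0v_1v_2v_3v_4$ and the inner path $v_{13}v_{10}v_{17}v_{14}v_{11}$ are joined by the spokes $v_0v_{10}, v_1v_{11}, v_3v_{13}, v_4v_{14}$, and analogously $G_2$ has outer path $v_5v_6v_7v_8v_9$, inner path $v_{16}v_{19}v_{12}v_{15}v_{18}$, and spokes $v_5v_{15}, v_6v_{16}, v_8v_{18}, v_9v_{19}$. Since no edge of $G_2$ is crossed by an edge of $G_1$ (and no edge of $G_2$ can pass through an $E_1$--$E_1$ crossing, by goodness of $D$), the drawing of $G_2$ lies in the complement of the subdrawing of $G_1$; because $G_2$ is connected, its whole image is contained in a single region $R_1$ of that subdrawing. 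In particular, all six vertices of $V_2^2$ lie in $R_1$. A symmetric argument yields a region $R_2$ of the $G_2$-subdrawing that contains every vertex of $V_1^2$.

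Second, I would apply Lemma~\ref{lemma: cr(G12) >= 3} with $(i,R)=(1,R_1)$ to obtain $\nu_D(E_1)+\nu_D(E_1,E_{12})\geq |V_{in}(R_1;G_2)|-3=6-3=3$, and with $(i,R)=(2,R_2)$ to obtain $\nu_D(E_2)+\nu_D(E_2,E_{12})\geq 3$. Summing these two inequalities yields the right-hand ``$\geq 6$'' in the lemma. The left-hand inequality $\nu(D)\geq \nu_D(E_1)+\nu_D(E_1,E_{12})+\nu_D(E_2)+\nu_D(E_2,E_{12})$ is immediate from the edge partition $E(P(10,3))=E_1\cup E_2\cup E_{12}$, the identity $\nu(D)=\nu_D(E_1)+\nu_D(E_2)+\nu_D(E_{12})+\nu_D(E_1,E_2)+\nu_D(E_1,E_{12})+\nu_D(E_2,E_{12})$, the hypothesis $\nu_D(E_1,E_2)=0$, and the trivial $\nu_D(E_{12})\geq 0$.

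The only real subtlety will be the topological step: a connected subgraph whose edges are never crossed by those of another subgraph must live in a single region of the latter's subdrawing, even when that latter subdrawing is itself nonplanar. Once this standard fact is in hand, the rest of the proof is a purely mechanical combination of Lemma~\ref{lemma: cr(G12) >= 3} with the hypothesis, so I expect this topological observation (rather than any numerical bookkeeping) to be the only substantive step.
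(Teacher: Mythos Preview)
Your proposal is correct and follows essentially the same argument as the paper: from $\nu_D(E_1,E_2)=0$ and the connectedness of each $G_i$ you conclude that $G_{3-i}$ lies in a single region $R_i$ of the subdrawing of $G_i$ with $|V_{in}(R_i;G_{3-i})|=6$, then apply Lemma~\ref{lemma: cr(G12) >= 3} once for each $i$ and sum. Your write-up is simply more explicit than the paper's (spelling out the connectedness of $G_1,G_2$, the topological step, and the decomposition of $\nu(D)$), but the underlying route is identical.
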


\begin{proof}
Since $\nu_D(E_1,E_2)=0$, for $i=1,2$, $G_{3-i}$ lies in one region
of $G_i$, say region $R_{i}$, and $|V_{in}(R_i;G_{3-i})|=6$. By
Lemma \ref{lemma: cr(G12)
>= 3}, we have $\nu_D(E_i)+\nu_D(E_i,E_{12})\geq
|V_{in}(R_{i};G_{3-i})|-3=6-3=3$. Hence,
$\nu(D)\geq\nu_D(E_1)+\nu_D(E_1,E_{12})+\nu_D(E_2)+\nu_D(E_2,E_{12})\geq
3+3=6$.
\end{proof}

Now we introduce another graph $H$, which is a special dual graph of
$P(10,3)$. It will be very helpful in our later proof, since we will
enumerate the drawings of $P(10,3)$ according to the possibilities
of $H$.

Let $D$ be a drawing of $P(10,3)$. For $i=1,2$, let $D_i$ be the
subdrawing of $D$ corresponding to the edges of $E_i$, and let
$H=(V_{H},E_{H})$ be a graph corresponding to the drawing $D$, where
$V_{H}=\{u_j: R_j$ is a region of $D_1$ and there exists at lest one
segment of $G_{2}$ which lies in region $R_j$\}, $E_{H}=\{(u_j,u_k):
R_j$ and $R_k$ are two adjacent regions of $D_1$ and there exists at
lest one edge of $E_{2}$ which crosses the common boundary of $R_j$
and $R_k$\} (see Figure \ref{fig: H}). Then $\nu_D(E_1,E_2)\geq
|E_H|$. Furthermore, let $f_n=|\{u: u\in V_H$ and $ d_H(u)=1\}|$,
and we have the following lemma.
\begin{figure}[ht]
\centering\includegraphics[scale=1.0]{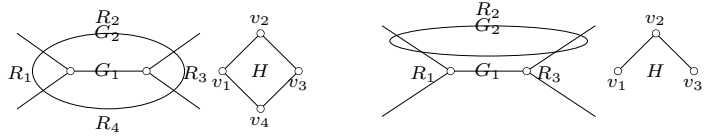}
\caption{\small{Some examples of $H$}}\label{fig: H}
\end{figure}

\begin{lemma}\label{lemma: cr(E_1,E_2)>=E_H}
$\nu_D(E_1,E_2)\geq 2(|V_H|-1)$ for $H\cong P_{|V_H|}$ and
$\nu_D(E_1,E_2)\geq |E_H|+f_n$ for $H\not\cong P_{|V_H|}$.
\end{lemma}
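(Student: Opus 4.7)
The plan is: for each edge $e = (R_j, R_k) \in E_H$, let $T_e$ denote the number of crossings of $E_2$ with $E_1$ on the common boundary of $R_j$ and $R_k$. Since every crossing of $E_1$ with $E_2$ lies on the boundary between two adjacent regions of $D_1$, both of which must contain $E_2$-segments near the crossing and hence belong to $V_H$, one has $\nu_D(E_1,E_2)=\sum_{e\in E_H}T_e$. The definition of $E_H$ already gives $T_e\geq 1$, so the task is to sharpen this lower bound on appropriate edges.

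Two structural facts drive the argument. First, $G_2$ is $2$-edge-connected, as can be verified directly from its description in Section~2 (every edge lies on an alternative path). Second, $H$ is connected: since $G_2$ is connected, every edge-curve of $G_2$ traces a walk in $H$, so all vertices of $V_H$ lie in a single component of $H$.

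\textbf{Case $H\cong P_{|V_H|}$.} Label $H$ as $R_1R_2\cdots R_n$. For each edge $e_j=(R_j,R_{j+1})$, take the cut $A=R_1\cup\cdots\cup R_j$, $B=R_{j+1}\cup\cdots\cup R_n$. Because $H$ is a path, the unique $H$-edge joining $A$ and $B$ is $e_j$, so every $E_2$-crossing on $\partial A$ contributes to $T_{e_j}$. If both $V_2\cap A$ and $V_2\cap B$ are nonempty, the $2$-edge-connectivity of $G_2$ provides an edge cut of size at least two between them, and each cut-edge crosses $\partial A$ an odd number of times, yielding $T_{e_j}\geq 2$. Otherwise, without loss of generality $V_2\cap A=\emptyset$; but $A$ still contains some $R_i\in V_H$, and the $E_2$-segment in $R_i$ must come from a $G_2$-edge whose $V_2$-endpoints both lie in $B$. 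Such an edge has to exit and re-enter $A$, producing at least two crossings on $\partial A$, so again $T_{e_j}\geq 2$. Summing over the $n-1$ edges of $H$ gives $\nu_D(E_1,E_2)\geq 2(n-1)$.

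\textbf{Case $H\not\cong P_{|V_H|}$.} The connectivity of $H$, together with $|V_H|\geq 3$ and $H$ not being a path, forbids two leaves of $H$ from being adjacent; hence the $f_n$ leaves of $H$ correspond to $f_n$ distinct edges of $H$. For each leaf $R_j$ with unique neighbor $R_k$, applying the same dichotomy to the cut $A=\{R_j\}$, $B=V_H\setminus\{R_j\}$ (noting that the only $E_2$-crossings on $\partial R_j$ lie on the common boundary with $R_k$) yields $T_{(j,k)}\geq 2$. Combined with $T_e\geq 1$ on the remaining $|E_H|-f_n$ edges, this gives $\nu_D(E_1,E_2)\geq (|E_H|-f_n)+2f_n=|E_H|+f_n$.

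The main obstacle is the case where one side of the chosen cut contains no $V_2$-vertex, so the usual $2$-edge-connectivity cut argument is unavailable; here I rely on the exit-and-re-enter phenomenon and must verify that the two forced crossings land on the designated boundary $B_e$ rather than being dispersed across $\partial A$. This is precisely where the hypothesis that $H$ is a path (Case~1) or that $R_j$ is a leaf (Case~2) is used: it confines every $E_2$-crossing on $\partial A$ to a single $H$-edge.
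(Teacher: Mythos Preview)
Your proof is correct and follows essentially the same approach as the paper: both use the connectivity of $G_2$ to force $H$ connected, and the $2$-(edge-)connectivity of $G_2$ to force at least two crossings on the boundary associated to each path-edge or leaf-edge of $H$. Your version is in fact more careful than the paper's, which simply asserts ``since $G_2$ is $2$-connected, $G_2$ has to cross the common bound of $R_j$ and $R_{j+1}$ at least twice'' without separating out (as you do) the subcase where one side of the cut contains no vertex of $V_2$ and the two crossings instead come from a single edge entering and leaving.
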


\begin{proof} Since $G_2$ is a connected graph, $H$ has to be a
connected graph.

For $H\cong P_{|V_H|}$, let $H=P_{u_1u_2\cdots u_{|V_H|}}$. For
$1\leq j\leq |V_H|-1$, since $G_2$ is a 2-connected graph, $G_2$ has
to cross the common bound of $R_j$ and $R_{j+1}$ at least twice.
Hence $\nu_D(E_1,E_2)\geq 2(|V_H|-1)$.

For $H\not\cong P_{|V_H|}$, $|V_H|\geq 3$. For each edge $u_ju_k\in
E_H$, by the definition of $E_H$, $G_2$ crosses the common bound of
$R_j$ and $R_k$ at least once. For $f_n>0$, let $u_j\in V_H$ be an
arbitrary vertex with $d_H(u_j)=1$, $N(u_j)=\{u_k\}$. Since
$|V_H|\geq 3$ and $G_2$ is a connected graph, $d_H(u_k)>1$. Since
$G_2$ is a 2-connected graph, $G_2$ crosses the common bound of
$R_j$ and $R_k$ at least twice. Hence $\nu_D(E_1,E_2)\geq
|E_H|+f_n$.
\end{proof}

\section{Crossing number of $P(10,3)$}
\indent \indent In the Figure \ref{fig:(cr(P103)=6)}, we give a
drawing of $P(10,3)$ with 6 crossings. Hence, we have the following
lemma.
\begin{lemma}\label{lemma: cr(P(103)) <= 6}
$cr(P(10,3))\leq 6$.
\end{lemma}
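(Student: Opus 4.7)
The plan is to prove this upper bound by constructing an explicit good drawing $D$ of $P(10,3)$ and checking that $\nu(D)=6$. Since $\gcd(10,3)=1$, the inner edges $y_iy_{i+3}$ form a single 10-cycle $y_0 y_3 y_6 y_9 y_2 y_5 y_8 y_1 y_4 y_7 y_0$, so the task reduces to drawing two disjoint 10-cycles (the outer $x$-cycle and this inner $y$-cycle) together with the ten spokes $x_iy_i$ while keeping the number of crossings as small as possible.

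My construction would begin with the outer 10-cycle drawn as a convex polygon and each $y_i$ placed close to its partner $x_i$, so that every spoke $x_iy_i$ is a short, clean segment. The delicate step is routing the inner cycle: drawing all ten step-$3$ chords straight inside the decagon produces far more than 6 crossings, so I would push a carefully chosen subset of the inner edges to the outside of the outer 10-cycle, exploiting the cyclic symmetry of $P(10,3)$ to balance the load among spokes, outer-cycle edges, and the remaining interior chords. A natural first attempt is to route roughly half of the inner cycle outside, in a pattern compatible with the step-$3$ structure, so that each edge that crosses the outer cycle crosses exactly one spoke and at most one other chord.

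After fixing such a layout I would verify $\nu(D)=6$ by enumerating each crossing pair directly, separating the count into $\nu_D(E_1,E_2)$, $\nu_D(E_i,E_{12})$ and $\nu_D(E_i)$ in the notation already introduced in Section~2. The main obstacle is purely combinatorial-geometric rather than conceptual: among the many plausible symmetric layouts, most give seven or more crossings, and one must identify the specific routing of the inner cycle that realizes exactly six. This is precisely what Figure~\ref{fig:(cr(P103)=6)} is claimed to depict, so the proof collapses to exhibiting that drawing and reading off the crossing count.
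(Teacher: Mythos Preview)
Your proposal is correct and takes essentially the same approach as the paper: the upper bound is established simply by exhibiting an explicit drawing of $P(10,3)$ with six crossings, namely the one in Figure~\ref{fig:(cr(P103)=6)}. The paper's entire proof is that figure together with the observation that it has six crossings, which is exactly what your final sentence reduces to.
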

\begin{figure}[ht]
\centering\includegraphics[scale=1.0]{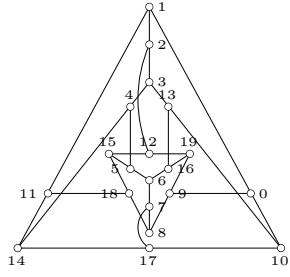}
\caption{\small{A drawing of $P(10,3)$ with 6
crossings}}\label{fig:(cr(P103)=6)}
\end{figure}

In the rest of this section, we shall prove that the value of
$cr(P(10,3))$ is not less than 6.

\begin{lemma}\label{lemma: cr(P(103)) >= 6}
Let $D$ be an arbitrary drawing of $P(10,3)$, then $\nu(D)\geq 6$.
\end{lemma}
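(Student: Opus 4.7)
The plan is to argue by contradiction: suppose $\nu(D)\leq 5$. Expanding
\begin{equation*}
\nu(D)=\nu_D(E_1)+\nu_D(E_2)+\nu_D(E_1,E_2)+\nu_D(E_1,E_{12})+\nu_D(E_2,E_{12})+\nu_D(E_{12}),
\end{equation*}
Lemma \ref{lemma:  cr(E1,E2)=0} disposes of $\nu_D(E_1,E_2)=0$ immediately, while Lemma \ref{lemma: cr(E_1,E_2)>=E_H} rules out $\nu_D(E_1,E_2)=1$: the only admissible shape would be $|V_H|=1$, which forces $\nu_D(E_1,E_2)=0$. It therefore remains to derive a contradiction in each of the cases $\nu_D(E_1,E_2)\in\{2,3,4,5\}$.

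The core argument is a case analysis on $t:=\nu_D(E_1,E_2)$ coupled with the isomorphism type of the auxiliary graph $H$ from Lemma \ref{lemma: cr(E_1,E_2)>=E_H}. For each $t$ the admissible list for $H$ is short: $t=2$ forces $H\cong P_2$; $t=3$ leaves $H\in\{P_3,C_3\}$; and larger $t$ still admits only a finite menu. By swapping the roles of $G_1$ and $G_2$ we obtain a mirror graph $H'$ encoding how $G_1$ lies in the regions of $D_2$, which must satisfy the same constraints. Once $H$ and $H'$ are fixed, pigeonhole on the six vertices of $V^2_2$ (resp.\ $V^2_1$) distributed across the $|V_H|$ (resp.\ $|V_{H'}|$) regions yields at least one region $R$ of $D_i$ containing many $V^2_{3-i}$-vertices, and Lemma \ref{lemma: cr(G12) >= 3} then gives $\nu_D(E_i)+\nu_D(E_i,E_{12})\geq |V_{in}(R;G_{3-i})|-3$. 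Summing the two resulting bounds with $t$ closes every sub-case in which at least one of $|V_H|,|V_{H'}|$ is small enough to concentrate four or more $V^2_{3-i}$-vertices into a single region.

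The main obstacle is the sub-cases in which both $|V_H|$ and $|V_{H'}|$ are big enough that $V^2_{3-i}$ is spread thinly across regions, so Lemma \ref{lemma: cr(G12) >= 3} alone returns only weak bounds. For these I would invoke Observation \ref{observation: boundary to 3-degree} and Lemma \ref{lemma: boundary to 3-degree cr>=1}: each $v\in V^3_i$ distinguishes a specific three-vertex subset $V^2_i-N(v)$ which, if held together in a single region of $G_{3-i}$, forces an additional crossing in $\nu_D(E_{3-i},E_{12})$. Combining this with Lemma \ref{lemma: two 2-degree >=3} (any region of $G_i$ with both $|V_{in}|\geq 2$ and $|V_{out}|\geq 2$ upgrades $\nu_D(E_1,E_2)$ to at least $3$) should eliminate the residual $H$-shapes, or else charge the missing crossings to $\nu_D(E_{3-i},E_{12})$. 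In every case the accumulated lower bound reaches $6$, contradicting $\nu(D)\leq 5$ and establishing the lemma.
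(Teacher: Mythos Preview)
Your high-level plan---reduce to $\nu_D(E_1,E_2)\in\{2,3,4,5\}$ and then case-split on the shape of $H$---matches the paper's strategy, but the execution has real gaps.

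First, a concrete error: the admissible list for $t=3$ is wrong. Lemma~\ref{lemma: cr(E_1,E_2)>=E_H} gives only \emph{lower} bounds on $t$ in terms of $H$, so $H\cong P_2$ (lower bound $2$) is perfectly consistent with $t=3$, while $H\cong P_3$ (lower bound $4$) is not. The correct list is $\{P_2,K_3\}$, not $\{P_3,C_3\}$. For larger $t$ the menus grow accordingly, and you never write them down.

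Second, and more seriously, the paper does \emph{not} organize the case analysis by $t$. It first normalizes to $\nu_D(E_1)\le\nu_D(E_2)$, hence $\nu_D(E_1)\in\{0,1\}$, which pins down the planar (or near-planar) drawing $D_1$ to a handful of explicit pictures with named regions and named boundary paths such as $P_{v_{10}v_{17}v_{14}}$. This is essential: many sub-cases in the paper are closed only by tracking where \emph{specific} vertices of $V^2_2$ must sit and then observing that a \emph{specific} $E_{12}$-edge---typically $v_2v_{12}$---is forced to cross the boundary cycle of $G_1$. For example, in Case~1.3.1 the general lemmas yield only $\nu(D)\ge 3+2=5$; the sixth crossing comes from the concrete fact that $v_2v_{12}$ cannot be drawn cleanly. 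Your abstract pigeonhole on $|V_H|$, $|V_{H'}|$ together with Lemmas~\ref{lemma: boundary to 3-degree cr>=1}--\ref{lemma: cr(G12) >= 3} cannot manufacture this last crossing, because nothing in those statements singles out $v_2v_{12}$ over any other $E_{12}$-edge.

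The mirror graph $H'$ is a pleasant symmetry device, but it does not buy what you need: when $t\ge 3$ Lemma~\ref{lemma: two 2-degree >=3} no longer forces a $5$--$1$ split of $V^2_{3-i}$ across two regions, so Lemma~\ref{lemma: cr(G12) >= 3} applied to the fullest region of $D_{3-i}$ can return $0$ on both sides. The phrase ``should eliminate the residual $H$-shapes'' is where the proof actually lives, and the paper's treatment of Cases~1.3--1.6 and 2.1--2.2 shows that this elimination requires explicit vertex-by-vertex placement arguments tied to the particular incidence structure of $P(10,3)$, not just the counting lemmas. As written, your proposal is a plausible outline but not a proof.
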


\begin{proof}
By contradiction. Suppose $\nu(D)\leq 5$. Then by Lemma \ref{lemma:
cr(E1,E2)=0}, $\nu_D(E_1,E_2)>0$. It follows $|V_H|\geq 2$. By Lemma
\ref{lemma: cr(E_1,E_2)>=E_H}, $\nu_D(E_1,E_2)\geq 2$. Hence
$\nu_D(E_1)+\nu_D(E_2)\leq 3$. Without loss of generality, we assume
$\nu_D(E_1)\leq \nu_D(E_2)$. Then $\nu_D(E_1)=0$ or $\nu_D(E_1)=1$.

\noindent{Case 1. } $\nu_D(E_1)=0$. It follows $2\leq |V_H|\leq 4$,
because $D_1$ has 4 regions in this case. Notice that $H$ is a
connected graph, we can enumerate all the possibilities of $H$. If
$|V_H|=2$, we have $H\in\{ P_2\}$. If $|V_H|=3$, we have $H\in\{
P_3,K_3\}$. And if $|V_H|=4$, we have $H\in\{
P_4,K_4,K_{1,3},K_{1,3}+e,C_4,K_4-e\}$. Since $\nu(D)\leq 5$, by
Lemmas \ref{lemma: cr(E1,E2)=0} and \ref{lemma: cr(E_1,E_2)>=E_H},
$H$ is not isomorphic to any graph of $\{K_{1,3},P_4,K_4\}$, and
there are six subcases depending on $H$.

\noindent{Case 1.1.} $H\cong P_2$. By symmetry, we may assume $G_2$
lies in region $R_1\cup R_2$ and crosses $P_{v_{10}v_{17}v_{14}}$.
By Lemma \ref{lemma: cr(E_1,E_2)>=E_H}, $\nu_D(E_1,E_2)\geq 2$. Let
$R$ be the region lying in outside of $D_2$. By Lemma \ref{lemma:
cr(G12)
>= 3}, $\nu_D(E_2)+\nu_D(E_2,E_{12})\geq
|V_{in}(R;G_{1})|-3\geq5-3=2$. $v_2v_{12}$ crosses one edge of
$C_{v_{10}v_{0}v_{1}v_{11}v_{14}v_{4}v_{3}v_{13}v_{10}}$. Since
$\nu(D)\leq 5$, vertices $v_9$ and $v_{18}$ have to lie in $R_1$,
while vertices $v_5$ and $v_{16}$ have to lie in $R_2$.  By Lemma
\ref{lemma: two 2-degree >=3}, $\nu_D(E_1,E_2)\geq3$. Hence, we have
$\nu(D)\geq 3+2+1=6$, a contradiction (see Figure \ref{fig:(in
2,3)}(1)).

\begin{figure}[h]
\centering\includegraphics[scale=1.0]{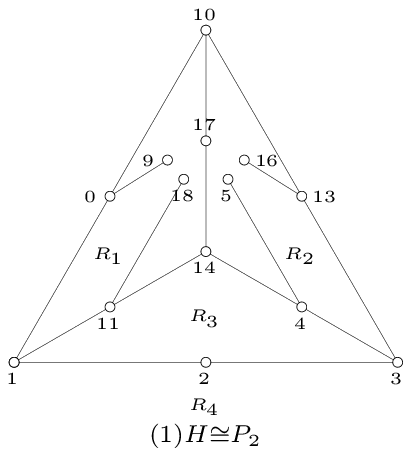}
\centering\includegraphics[scale=1.0]{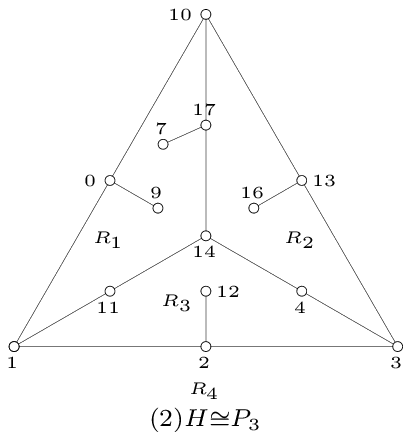}
\centering\includegraphics[scale=1.0]{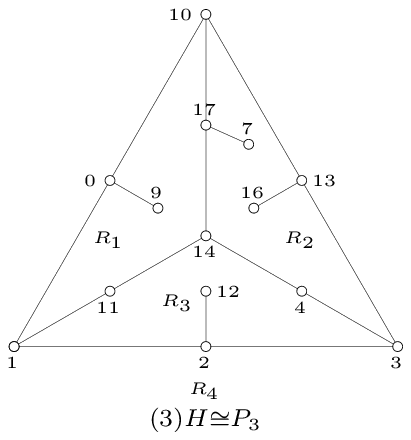} \caption{\small{
A drawing with $H\cong P_2$} and two drawings with $H\cong
P_3$}\label{fig:(in 2,3)}
\end{figure}

\noindent{Case 1.2.} $H\cong P_{3}$. By symmetry, we may assume
$G_2$ lies in region $R_1\cup R_2\cup R_3$, $G_2$ crosses
$P_{v_{14}v_{11}v_{1}}$ and $P_{v_{14}v_{4}v_{3}}$.

By Lemma \ref{lemma: cr(E_1,E_2)>=E_H}, $\nu_D(E_1,E_2)\geq 4$. Let
$R$ be the region lying in outside of $D_2$. By Lemma \ref{lemma:
cr(G12) >= 3}, $\nu_D(E_2)+\nu_D(E_2,E_{12})\geq
|V_{in}(R;G_{1})|-3\geq4-3=1$. Since $\nu(D)\leq 5$, vertex $v_{9}$
has to lie in $R_1$, vertex $v_{12}$ has to lie in $R_3$, vertex
$v_{16}$ has to lie in $R_2$, and $v_{7}$ has to lie in $R_1$
$(R_2)$. By Lemma \ref{lemma: two 2-degree
>=3}, the edges of $E_2$ cross the bound of $R_1$ $(R_2)$
at least three times and cross the edges of $P_{v_{14}v_{4}v_{3}}$
$(P_{v_{14}v_{11}v_{1}})$ at least twice. Hence, we have $\nu(D)\geq
3+2+1=6$, a contradiction (see Figure \ref{fig:(in 2,3)}(2), (3)).

\noindent{Case 1.3.} $H\cong K_{3}$. By symmetry, we may assume
$G_2$ lies in region $R_1\cup R_2\cup R_3$, and $G_2$ crosses
$P_{v_{14}v_{11}v_{1}}$, $P_{v_{14}v_{4}v_{3}}$ and
$P_{v_{14}v_{17}v_{10}}$. By Lemma \ref{lemma: cr(E_1,E_2)>=E_H},
$\nu_D(E_1,E_2)\geq 3$.

Let $t_i=|\{v:v\in V^{3}_2$ and $v$ lies in $R_i\}|$. By symmetry,
we may assume $t_1\leq t_2\leq t_3$. There are four subcases
depending on $(t_1,t_2,t_3)$.

\noindent{Case 1.3.1.} $(t_1,t_2,t_3)=(0,0,4)$. By Lemma \ref{lemma:
cr(G12) >= 3}, $\nu_D(E_1)+\nu_D(E_1,E_{12})\geq
|V_{in}(R_3;G_{2})|-3\geq5-3=2$. Since $\nu(D)\leq 5$, $G_2$ does
not cross itself and all edges of
$\{v_{17}v_{7},v_{11}v_{18},v_{4}v_{5}\}$ are clean. It follows
$v_{2}v_{12}$ has to be crossed. Hence, we have $\nu(D)\geq
3+2+1=6$, a contradiction (see Figure \ref{fig:(0,0,4)}(1)).

\begin{figure}[ht]
\centering\includegraphics[scale=1.0]{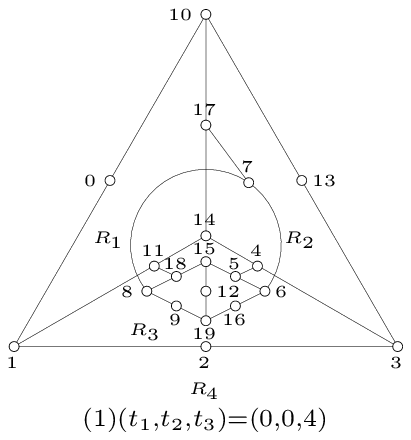}
\centering\includegraphics[scale=1.0]{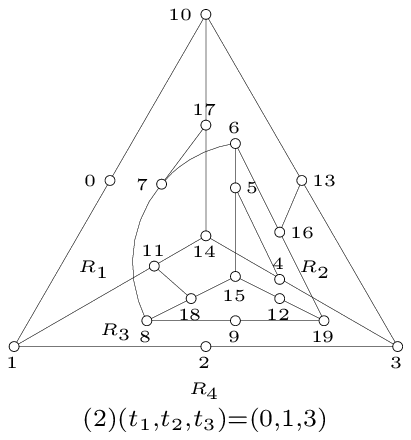}
\centering\includegraphics[scale=1.0]{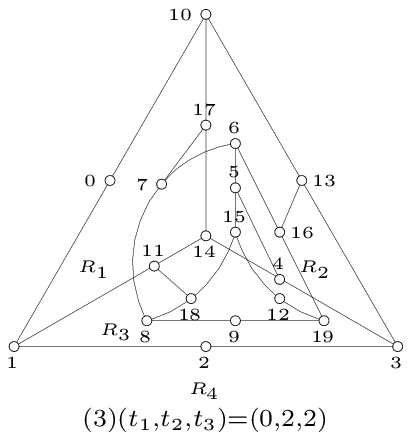} \caption{\small{
Three drawings with $(t_1,t_2,t_3)\in
\{(0,0,4),(0,1,3),(0,2,2)\}$}}\label{fig:(0,0,4)}
\end{figure}

\noindent{Case 1.3.2.} $(t_1,t_2,t_3)=(0,1,3)$. Then the edges of
$E_2$ have to cross the edges of $E_1$ at least four times, i.e.
$\nu_D(E_1,E_2)\geq4$. Let $v\in V^3_2$ be the vertex lying in
$R_2$, then all vertices of $V^{2}_2-N(v)$ lie in $R_3$. By Lemma
\ref{lemma: boundary to 3-degree cr>=1}, $\nu_D(E_1,E_{12})\geq1$.
Since $\nu(D)\leq 5$, $G_2$ does not cross itself and all edges of
$\{v_{13}v_{16},v_{17}v_{7},v_{4}v_{5},v_{11}v_{18}\}$ are clean. It
follows $v_{2}v_{12}$ has to be crossed. Hence, we have $\nu(D)\geq
4+1+1=6$, a contradiction (see Figure \ref{fig:(0,0,4)}(2)).

\noindent{Case 1.3.3.} $(t_1,t_2,t_3)=(0,2,2)$. Then the edges of
$E_2$ have to cross the edges of $E_1$ at least five times, i.e.
$\nu_D(E_1,E_2)\geq5$. Since $\nu(D)\leq 5$, $G_2$ does not cross
itself and all edges of
$\{v_{13}v_{16},v_{17}v_{7},v_{4}v_{5},v_{11}v_{18}\}$ are clean. It
follows $v_{2}v_{12}$ has to be crossed. Hence, we have $\nu(D)\geq
5+1=6$, a contradiction (see Figure \ref{fig:(0,0,4)}(3)).

\noindent{Case 1.3.4.} $(t_1,t_2,t_3)=(1,1,2)$. Then the edges of
$E_2$ have to cross the edges of $E_1$ at least five times, i.e.
$\nu_D(E_1,E_2)\geq5$. Since $\nu(D)\leq 5$, $G_2$ does not cross
itself and all edges of
$\{v_{13}v_{16},v_{17}v_{7},v_{4}v_{5},v_{11}v_{18},v_{0}v_{9}\}$
are clean. It follows $v_{2}v_{12}$ has to be crossed. Hence, we
have $\nu(D)\geq 5+1=6$, a contradiction (see Figure
\ref{fig:(1,1,2)}(1)).

\begin{figure}[h]
\centering\includegraphics[scale=1.0]{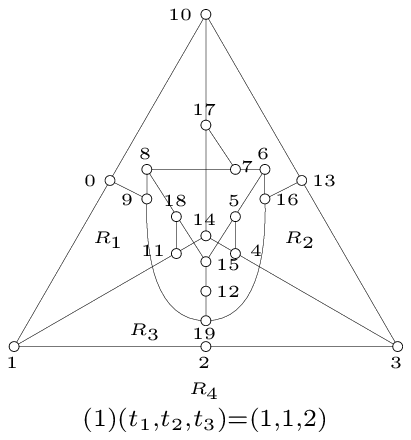}
\centering\includegraphics[scale=1.0]{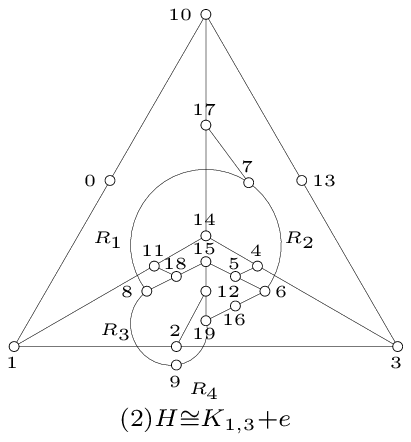}
\centering\includegraphics[scale=1.0]{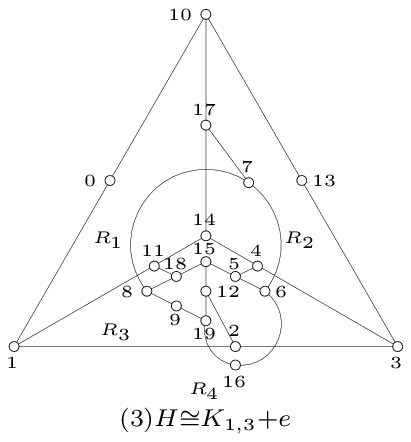}
\caption{\small{A drawing with $H\cong K_3$ and two drawings with
$H\cong K_{1,3}+e$}}\label{fig:(1,1,2)}
\end{figure}

\noindent{Case 1.4.} $H\cong K_{1,3}+e$. By symmetry, we may assume
$G_2$ crosses $P_{v_{14}v_{11}v_{1}}$,
$P_{v_{14}v_{4}v_{3}}$,$P_{v_{14}v_{17}v_{10}}$ and
$P_{v_{1}v_{2}v_{3}}$. By Lemma \ref{lemma: cr(E_1,E_2)>=E_H},
$\nu_D(E_1,E_2)\geq 5$. Since $\nu(D)\leq 5$, $G_2$ does not cross
itself and all edges of
$\{v_{17}v_{7},v_{11}v_{18},v_{4}v_{5},v_{2}v_{12}\}$ are clean. It
follows at least one edge of $v_{13}v_{16}$ and $v_{0}v_{9}$ has to
be crossed. Hence, we have $\nu(D)\geq 5+1=6$, a contradiction (see
Figure \ref{fig:(1,1,2)}(2), (3)).

\noindent{Case 1.5.} $H\cong C_{4}$. By symmetry, we may assume
$G_2$ crosses $P_{v_{10}v_{0}v_{1}}$, $P_{v_{14}v_{11}v_{1}}$,
$P_{v_{14}v_{4}v_{3}}$ and $P_{v_{10}v_{13}v_{3}}$. By Lemma
\ref{lemma: cr(E_1,E_2)>=E_H}, $\nu_D(E_1,E_2)\geq 4$.

Let $t_i=|\{v:v\in V^{3}_2$ and $v$ lies in $R_i\}|$. By symmetry,
there are only two possible kinds of drawings for
$\nu_D(E_1,E_2)\leq 5$.

\noindent{Case 1.5.1.} $(t_1,t_2,t_3,t_4)=(0,0,0,4)$. By Lemma
\ref{lemma: cr(E_1,E_2)>=E_H}, $\nu_D(E_1,E_2)\geq 4$. By Lemma
\ref{lemma: cr(G12) >= 3}, $\nu_D(E_1)+\nu_D(E_1,E_{12})\geq
|V_{in}(R_4;G_{2})|-3\geq5-3=2$. Hence, we have $\nu(D)\geq 4+2=6$,
a contradiction (see Figure \ref{fig:(0,0,0,4)}(1)).

\begin{figure}[ht]
\centering\includegraphics[scale=1.0]{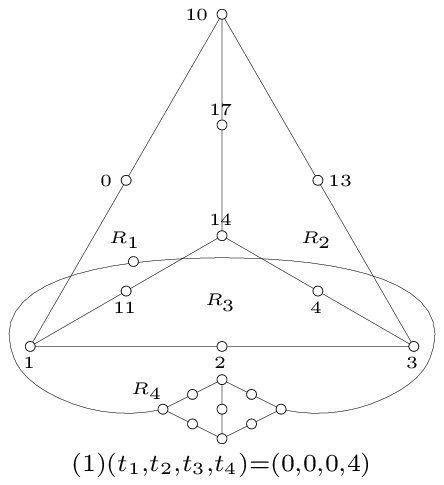}
\centering\includegraphics[scale=1.0]{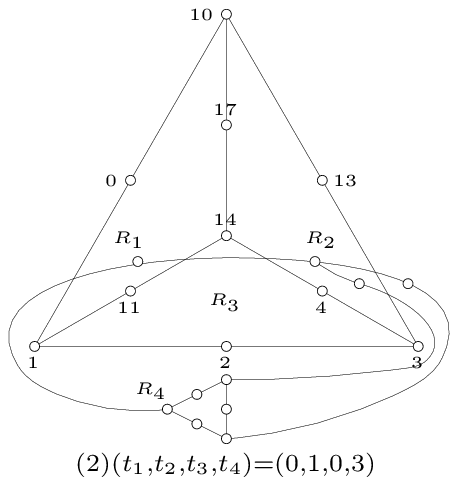}
\centering\includegraphics[scale=1.0]{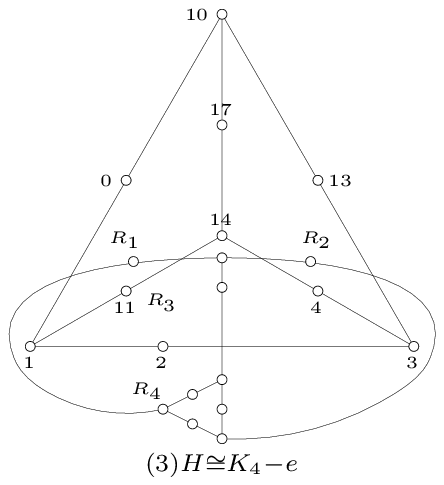}\caption{\small{Two
drawings with $H\cong C_{4}$ and a drawing with $H\cong K_{4}-e$
}}\label{fig:(0,0,0,4)}
\end{figure}

\noindent{Case 1.5.2.} $(t_1,t_2,t_3,t_4)=(0,1,0,3)$. Then $G_2$ has
to cross edges of $E_1$ at least five times, i.e.
$\nu_D(E_1,E_2)\geq5$. Let $v\in V^3_2$ be the vertex lying in
$R_2$, then all vertices of $V^{2}_2-N(v)$ lie in $R_4$. By Lemma
\ref{lemma: boundary to 3-degree cr>=1}, $\nu_D(E_1,E_{12})\geq 1$.
Hence, we have $\nu(D)\geq 5+1=6$, a contradiction (see Figure
\ref{fig:(0,0,0,4)}(2)).

\noindent{Case 1.6.} $H\cong K_{4}-e$. By symmetry, we may assume
$G_2$ crosses $P_{v_{10}v_{0}v_{1}}$, $P_{v_{14}v_{11}v_{1}}$,
$P_{v_{14}v_{4}v_{3}}$, $P_{v_{10}v_{13}v_{3}}$ and
$P_{v_{1}v_{2}v_{3}}$. By Lemma \ref{lemma: cr(E_1,E_2)>=E_H},
$\nu_D(E_1,E_2)\geq 5$. Let $v\in V^3_2$ be the vertex lying in
$R_3$, then all vertices of $V^{2}_2-N(v)$ lie in $R_4$. By Lemma
\ref{lemma: boundary to 3-degree cr>=1}, $\nu_D(E_1,E_{12})\geq 1$.
Hence, we have $\nu(D)\geq 5+1=6$, a contradiction (see Figure
\ref{fig:(0,0,0,4)}(3)).

\noindent{Case 2. } $\nu_D(E_1)=1$. Then $\nu_D(E_1)+\nu_D(E_2)\geq
2$. Since $\nu(D)\leq 5$, $\nu_D(E_1,E_2)\leq 3$. By Lemma
\ref{lemma: cr(E_1,E_2)>=E_H}, $H$ has to be isomorphic to $P_2$ or
$K_3$.

\begin{figure}[ht]
\centering\includegraphics[scale=1.0]{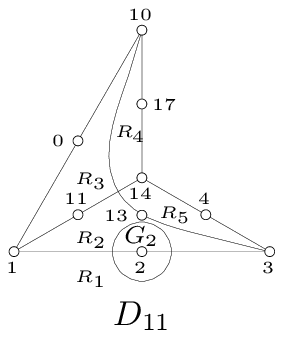}
\centering\includegraphics[scale=1.0]{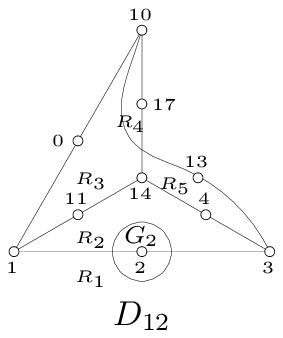}
\centering\includegraphics[scale=1.0]{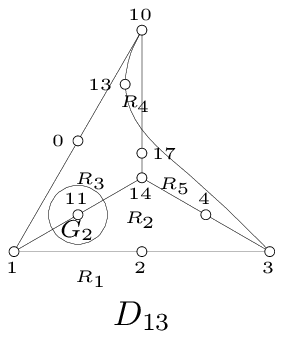}
\centering\includegraphics[scale=1.0]{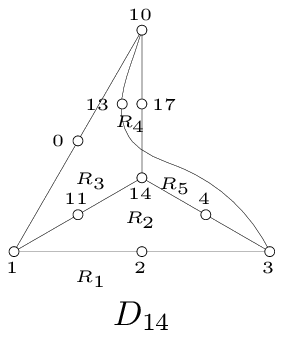}

\caption{\small{Four drawings with
$\nu_D(E_1)=1$}}\label{fig:D(E_1)=1}
\end{figure}

\noindent{Case 2.1.} $H\cong P_2$. By Lemma \ref{lemma:
cr(E_1,E_2)>=E_H}, $\nu_D(E_1,E_2)\geq 2$. By symmetry, there are
only four possible drawings of $G_1$, namely $D_{1i}$ $(i=1,2,3,4)$,
as shown in Figure \ref{fig:D(E_1)=1}. Suppose $G_2$ lies in region
$R_j\cup R_k$ and crosses the common bound of $R_j$ and $R_k$. Let
$R$ be the region lying in the outside of $D_2$. By Lemma
\ref{lemma: cr(G12)
>= 3}, $\nu_D(E_2)+\nu_D(E_2,E_{12})\geq
|V_{in}(R;G_{1})|-3\geq5-3=2$. Since $\nu(D)\leq 5$,
$\nu_D(E_1,E_{12})=0$. For drawing $D_{11}$ and $D_{12}$, the only
possible $(R_j,R_k)$ is $(R_1,R_2)$. For drawing $D_{13}$, the only
possible $(R_j,R_k)$ is $(R_2,R_3)$. For drawing $D_{14}$, no
$(R_j,R_k)$ satisfies $\nu_D(E_1,E_{12})=0$.

For drawing $D_{11}$, $v_{18}$ and $v_{16}$ have to lie in $R_2$,
while $v_{5}$, $v_{7}$ and $v_{9}$ have to lie in $R_1$. For drawing
$D_{12}$, $v_{18}$ and $v_{5}$ have to lie in $R_2$, while $v_{16}$,
$v_{7}$ and $v_{9}$ have to lie in $R_1$. For drawing $D_{13}$,
$v_{12}$ and $v_{5}$ have to lie in $R_2$, while $v_{16}$, $v_{7}$
and $v_{9}$ have to lie in $R_3$. By Lemma \ref{lemma: two 2-degree
>=3}, $\nu_D(E_1,E_2)\geq3$. Hence we have $\nu(D)\geq
1+2+3=6$, a contradiction (see Figure \ref{fig:D(E_1)=1}).

\noindent{Case 2.2.} $H\cong K_3$. By Lemma \ref{lemma:
cr(E_1,E_2)>=E_H}, $\nu_D(E_1,E_2)\geq 3$. Suppose $G_2$ lies in
region $R_j\cup R_k\cup R_l$. For $t=j,k,l$, let $s_t=|\{v: v\in
V^{2}_2$ and $v$ lies in region $R_t\}|$. Without loss of
generality, we may assume $s_j\leq s_k\leq s_l$. If $s_l\geq 5$, by
Lemma \ref{lemma: cr(G12) >= 3}, $\nu_D(E_1)+\nu_D(E_1,E_{12})\geq
|V_{in}(R_l;G_{2})|-3\geq5-3=2$. Hence, we have $\nu(D)\geq
3+1+2=6$, a contradiction. If $s_l\leq 4$, by Lemma \ref{lemma: two
2-degree
>=3}, the edges of $E_2$ have to cross the bound of $R_l$ at least
three times. It follows $\nu_D(E_1,E_2)\geq 3+1=4$. Hence we have
$\nu(D)\geq 2+4=6$, a contradiction (see Figure \ref{fig:H=K_3}).

\begin{figure}[ht]
\centering\includegraphics[scale=1.0]{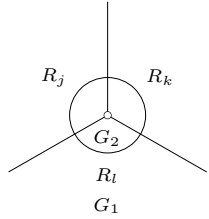}
\caption{\small{$H\cong K_3$}}\label{fig:H=K_3}
\end{figure}

By Cases 1-2, we have $\nu(D)\geq 6$.
\end{proof}

By Lemmas \ref{lemma: cr(P(103)) <= 6} and \ref{lemma: cr(P(103)) >=
6}, we have Theorem \ref{Theorem: cr(P(103))= 6} holds.

{\noindent \bf Acknowledgements}

We are very grateful to the referees for their careful reading with
corrections and useful comments.

\end{document}